\documentclass[runningheads]{llncs}
\usepackage[utf8]{inputenc}
\usepackage[x11names,dvipsnames,table]{xcolor} 
\usepackage{mathtools}
\usepackage{amsmath,amsfonts,amssymb}
\usepackage{enumitem}
\usepackage{graphicx,subfigure}
\usepackage{wasysym}
\usepackage{algpseudocode,algorithm}
\usepackage{capt-of}
\usepackage{cleveref}
\usepackage{amsfonts}
\usepackage{array}
\usepackage{sidecap}

\newtheorem{obs}{Observation}
\newcommand{\polylog}{\mathrm{polylog}}

\algtext*{EndFor}
\algtext*{EndIf}
\begin{document}
	\title{Computing The Packedness of Curves}
	\author{Sepideh Aghamolaei\inst{1} \and Vahideh Keikha\inst{2}\and
	Mohammad Ghodsi\inst{1} \and
		Ali Mohades\inst{4} }
	\authorrunning{Aghamolaei et al.}
	\institute{Department of Computer Engineering, Sharif University of Technology,\and The Czech Academy of Sciences, Institute of Computer Science,\and
		Department of Mathematics and Computer Science, Amirkabir University of Technology,
		\\ \email{aghamolaei@ce.sharif.edu, va.keikha@gmail.com, ghodsi@sharif.edu, mohades@aut.ac.ir}}

	\maketitle

\begin{abstract}
A polygonal curve $P$ with $n$ vertices is $c$-packed, if the sum of the lengths of the parts of the edges of the curve that are inside any disk of radius $r$ is at most $cr$, for any $r>0$. Similarly, the concept of $c$-packedness can be defined for any scaling of a given shape.

Assuming $L$ is the diameter of $P$ and $\delta$ is the minimum distance between points on disjoint edges of $P$, we show the approximation factor of the existing $O(\frac{\log (L/\delta)}{\epsilon}n^3)$ time algorithm is $1+\epsilon$-approximation algorithm. The massively parallel versions of these algorithms run in $O(\log (L/\delta))$ rounds. We improve the existing $O((\frac{n}{\epsilon^3})^{\frac 4 3}\polylog \frac n \epsilon)$ time $(6+\epsilon)$-approximation algorithm by providing a $(4+\epsilon)$-approximation $O(n(\log^2 n)(\log^2 \frac{1}{\epsilon})+\frac{n}{\epsilon})$ time algorithm, and the existing $O(n^2)$ time $2$-approximation algorithm improving the existing $O(n^2\log n)$ time $2$-approximation algorithm.

Our exact $c$-packedness algorithm takes $O(n^5)$ time, which is the first exact algorithm for disks. We show using $\alpha$-fat shapes instead of disks adds a factor $\alpha^2$ to the approximation.

We also give a data-structure for computing the curve-length inside query disks. It has $O(n^6\log n)$ construction time, uses $O(n^6)$ space, and has query time $O(\log n+k)$, where $k$ is the number of intersected segments with the query shape.
We also give a massively parallel algorithm for relative $c$-packedness with $O(1)$ rounds.
\end{abstract}

\section{Introduction}

In 2012, Driemel et al.~\cite{driemel2012approximating} introduced the $c$-packedness property for a curve as having curve-length at most $cr$ inside any disk of radius $r$ (\Cref{def:cpackedness}), but did not give an algorithm for computing, approximating, or deciding it.
Let $P=\{P_1,\ldots,P_n\}$ be a polygonal curve, and let $\overline{P_iP_{i+1}}, \forall i=1,\ldots, n-1$ denote the $i$-th edge of $P$.
		\begin{definition}[$c$-packed curve~\cite{driemel2012approximating}]\label{def:cpackedness}
		Let $P$ be a polygonal curve. 
		If the length of $P$ within any disk of radius $r$ is upper bounded by $cr$, $P$ is $c$-packed. 
	\end{definition}
%
The packedness of a disk $D$ of radius $r$ with respect to a curve $P$ is denoted by $\gamma (D,P)$ and defined as the length of $P$ inside $D$, divided by $r$. Formally,
				$\gamma (D,P)=\frac {\sum_{i=1}^{n-1} |\overline{P_iP_{i+1}} \cap D|}{r}.$
\Cref{fig:defs} shows an example with $\gamma(D,P)\approx 4$.
\begin{SCfigure}
\includegraphics[scale=0.6]{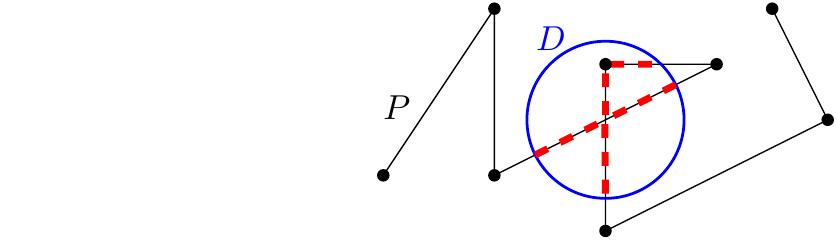}
\caption{A polygonal path $P$ and a disk $D$ with packedness about $4$.}
\label{fig:defs}
\end{SCfigure}

After introducing this concept, fairly good approximation algorithms for $c$-packed curves have been presented. Driemel et al.~\cite{driemel2012approximating} have shown that if two curves are $c$-packed, a $(1+\epsilon)$-approximation of their Fr{\'e}chet distance can be achieved in $O(\frac {cn}{\epsilon}+cn \log n)$ time.  Bringmann and K\"unnemann further improved this running time to $O(\frac {cn}{\sqrt{\epsilon}}\log^2 (\frac 1 \epsilon)+cn \log n)$~\cite{bringmann2015improved}.
Several other studies~\cite{chen2011approximate,driemel2013jaywalking,driemel2018probabilistic,har2014frechet,aghamolaei2020windowing,gudmundsson2015fast} are only justified when the input curves are $c$-packed.

For axis-aligned squares instead of disks, $c$-packedness was discussed under the name relative-length~\cite{gudmundsson2013algorithms} and they gave an exact $O(n^3)$ algorithm for it.
Computing the minimum $c$ for which a curve is $c$-packed for disks was discussed by Aghamolaei et al.~\cite{aghamolaei2020windowing} who formalized the problem (\Cref{def:problem}) and gave an algorithm for approximating the minimum $c$.

\begin{definition}[Minimum $c$ for $c$-packedness of a curve~\cite{aghamolaei2020windowing}]\label{def:problem}
		For a given polygonal curve $P$ with $n$ vertices, the objective is to compute the minimum value $c$ for which $P$ is $c$-packed, in other words
		$ \min_{\substack{\forall \text{disks } D,\\\gamma(D,P)\le cr}} c. $
\end{definition}

Aghamolaei et al.~\cite{aghamolaei2020windowing} introduced a data structure called aggregated query diagram (AQD) for computing the length of the curve inside all translations of a query shape. The authors designed a $(1+\epsilon)$-approximation algorithm for disks by approximating a disk with a regular polygon, and a $(1+\epsilon)$-approximation algorithm for computing the $c$-packedness within any convex polygon (without rotation), with running times $O(\frac{\log (L/\delta)}{\epsilon}n^3)$ and $O(\frac{\log (L/\delta)}{\epsilon^{5/2}}n^3\log n)$, respectively.

  \begin{definition}[Aggregated Query Diagram (AQD)~\cite{aghamolaei2020windowing}]
For a set $P$ of $n$ line segments and a polygonal shape $X$, a partitioning $c_1,\ldots,c_k$ of the plane with a set $O_i\subset S$ at each $c_i$, is called an aggregated query diagram if for any point $q\in c_i$, the query shape with representative point $q$ intersects the subset $O_i$ of segments, and $k$ is minimized. 
\end{definition}

\Cref{fig:aqd} is an example of an AQD for the curve and disk of \Cref{fig:defs}.

The number of the cells in the AQD of a convex shape $X$ of complexity $m$ is $O(n^2m^2)$, and its total complexity is $O(n^3m^2)$. Also, it takes $O(n^3m^2 \log(mn))$ time to construct the corresponding AQD of $X$.  

Later in~\cite{gudmundsson2020approximating}, the authors provided 2-approximation algorithm with running time $O(dn^2\log n)$ in $\mathbb{R}^d$, and a $(6+\epsilon)$-approximation algorithm based on WSPD~\cite{wspd} with running time $O((\frac{n}{\epsilon^3})^{\frac 4 3}\polylog \frac n \epsilon)$ in $\mathbb{R}^2$, for axis-aligned squares with the maximum packedness centered at the vertices of the curve.

\begin{figure}[t]
	\centering
	\includegraphics[scale=0.6]{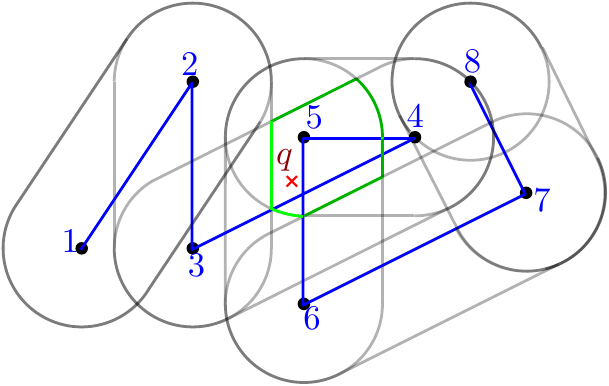}
	\caption{The AQD of a polygonal path for a disk query. The green region is the cell containing the query point $q$ (in red), so, it intersects with $\{\overline{P_3P_4},\overline{P_4P_5},\overline{P_5P_6}\}$.}
	\label{fig:aqd}
\end{figure}

A generalization of the relative $c$-packedness is the maximum of  the packedness of queries centered at a given point-set $S$.
For a set $S$ of points and a curve $P$, if for any disk of centered at a point of $S$, the length of the path inside the disk divided by its radius is at most $c$, the curve is an {\em $S$-relative $c$-packed} curve.

$c$-packedness can be defined for any shape $X$ of fixed orientation and its scalings. Let $P=\{P_1,\ldots,P_n\}$ be a polygonal curve, and $X$ be a shape with the smallest enclosing disk of diameter $2r$. If the length of $P$ within any translation of $X$ is upper bounded by $cr$, then $P$ is $c$-packed.

In the massively parallel computations (MPC)~\cite{beame2017communication} data is distributed among a sublinear number of machines $O(n^{1-\eta})$ each with sublinear memory $O(n^{\eta})$, that process it during $O(\textrm{poly}(\frac{1}{\eta}))$ parallel rounds and communicate with each other after each round. The parallel version of batched queries~\cite{agarwal1990partitioning} are simultaneous geometric queries in MPC. Some examples in MPC are $2$-sided range queries~\cite{aghamolaei2018geometric} and length queries~\cite{aghamolaei2020windowing}. We improve the simultaneous length queries algorithm to take $O(\textrm{poly}(\frac{1}{\eta}))$ rounds instead of $O(\log \frac{L}{\delta})$.

\paragraph{Contributions}

\begin{itemize}
	\item We show the lower-bound on the vertex-relative $c$-packedness for $c$-packedness is $2$, which matches the upper-bound of the existing algorithms.
	
	\item We give a massively parallel algorithm for vertex-relative $c$-packedness.
	
	\item We give an $O(n^5)$ time exact algorithm for the minimum $c$-packedness. 
	
	\item We give a $G(n/\epsilon^2,n)$ time $(4+\epsilon)$-approximation algorithm, improving the $(6+\epsilon)$-approximation algorithm, and a $O(n^2)$ time $2$-approximation algorithm improving the existing $O(n^2\log n)$ time $2$-approximation algorithm.
	
	\item We give a data-structure for length queries using disks of arbitrary size, called Hierarchical Aggregated Query (HAQ) data-structure. It can be constructed in $O(n^6\log n)$ time, using $O(n^6)$ space, and has query time $O(\log n+k)$.
\end{itemize}
\paragraph{A Summary of The Results on Minimum $c$-packedness}
\Cref{table:results} summarize the results on minimum $c$-packedness and vertex-relative minimum $c$-packedness.
\begin{table}[h]
\centering
\begin{tabular}{|p{3cm}|c|c|c|c|}
\hline
Shape & Approx. & Time & Reference & Relative $c$-Packed\\
\hline\hline
circle & $1+\epsilon$ & $O(\frac{\log (L/\delta)}{\epsilon}n^3)$ & Section~\ref{sec:SL} & -\\
convex polygon (constant complexity) & $2+\epsilon$ & $O(\frac{\log (L/\delta)}{\epsilon}n^3\log n)$ & \cite{aghamolaei2020windowing}& -\\
circle& $2+\epsilon$ & $O(\frac{\log (L/\delta)}{\epsilon^{5/2}}n^3\log \frac{n}{\epsilon})$ & \cite{aghamolaei2020windowing}& -\\
square & exact & $O(n^3)$ & \cite{gudmundsson2013algorithms}& -\\
$d$-cube & $2$ & $O(dn^2 \log n)$ &  \cite{gudmundsson2020approximating}& yes\\
square & $6+\epsilon$ & $O((\frac{n}{\epsilon^3})^{\frac 4 3}\polylog \frac n \epsilon)$ & \cite{gudmundsson2020approximating}& yes\\
\hline
circle & exact & $O(n^5)$ & \Cref{thm:events}& -\\
$\alpha$-fat shapes & $\alpha^2\beta$ & $T_c(n)$ & \Cref{theorem:fat}& -\\
circle & exact & $O(n^5)$ & \Cref{thm:events}& -\\
\hline
circle, square & $2$ & $O(n^2)$ & \Cref{theorem:improved} & yes\\
circle, square & $4+\epsilon$ & $O(n(\log^2 n)(\log^2 \frac{1}{\epsilon})+\frac{n}{\epsilon})$ & \Cref{thm:wspd} & yes\\
\hline
\end{tabular}
\caption{A summary of results on minimum $c$-packedness of curves. $L$ is the length of the curve. $\delta$ is the minimum distance between two points from disjoint edges. $T_c(n)$ is the time complexity of a $\beta$-approximation for $c$-packedness using circles.}
\label{table:results}

\begin{tabular}{|p{3cm}|c|c|c|c|p{2cm}|}
\hline
Shape & Approx. & Rounds & Memory & Reference & Constraint\\
\hline\hline
convex polygon (constant complexity) & $2(1+\epsilon)$ & $O(\log (\frac{L}{\delta}))$ & $O(\frac{nk}{\epsilon})$ & \cite{aghamolaei2020windowing}& $k$ x-monotone curves\\
circle& $2(1+\epsilon)$ & $O(\log (\frac{L}{\delta}))$ & $O(\frac{nk}{\epsilon^{5/2}})$ & \cite{aghamolaei2020windowing}& $k$ x-monotone curves\\
\hline
circle& $1+\epsilon$ & $O(\log (\frac{L}{\delta}))$ & $O(\frac{nk}{\epsilon^{5/2}})$ & \cite{aghamolaei2020windowing}, Section~\ref{sec:SL}& $k$ x-monotone curves\\
circle& 2 & $O(1)$ & $O(n)$ & \Cref{theorem:parallel}&-\\
\hline
\end{tabular}
\caption{Minimum $c$-packedness in the MPC model. Memory means the total memory of all machines. Polynomial factors of $1/\eta$ are ignored.}
\label{table:parallel}
\vspace{-0.8cm}
\end{table}
\section{Extending Previous Works}

\subsection{The Black-Box Versions of The Algorithms of~\cite{aghamolaei2020windowing}}\label{sec:general}
Here, we discuss the subroutines of the algorithms in~\cite{aghamolaei2020windowing}  and their analysis, some of which can be substituted by similar ones in a black-box manner.

\paragraph{Point-Location among Curves and Length Query}
For a  set of $n$ segments and a shape $X$, the length query asks
for finding the total sum of the lengths of the parts of the segments that lie inside $X$~\cite{aghamolaei2020windowing}.

The time complexity of the length query using the algorithm of~\cite{aghamolaei2020windowing} is $O(\log n+k)$ for convex polygons and congruent disks and using the algorithm of~\cite{aghamolaei2020point} is $O(\log n+k)$ for other closed convex shapes defined by polynomials of constant degree. In~\cite{aghamolaei2020windowing}, the length query becomes a point-location among a set of curves, by taking the Minkowski sum of the input polygonal curve with the query shape, so the time complexity of length query is the time complexity of point-location in AQD. Other point-location algorithms in a set of curves also exist~\cite{hanniel2000two,har2001online}, but they either provide no theoretical guarantees or their theoretical guarantee is worse than the algorithm that we used.
Alternatively, length query can be computed via a linear scan, which takes $O(n)$.

\paragraph{Popular Places}
For a set of segments, a polygon $X$, the goal of length-based popular places~\cite{aghamolaei2020windowing} is to find a translation of $X$ such that the length of the part of the intersecting segment inside $X$ is maximized.
The original definition asks for the maximum number of curves that intersect a translation of $Q$~\cite{benkert2010finding}. Other variations have also been solved~\cite{aghamolaei2020windowing}.

\paragraph{AQD for disks} In~\cite{aghamolaei2020windowing}, first, the Minkowski sum of the curve and the query shape is computed, then the results are divided into a set of line segments and a set of circles, finally the intersection of the line-segments is computed via line-sweeping and the intersection of $n$ disks of radius $r$ with the segments centered at the vertices of the path is computed by checking every (disk,edge) pair to see if they intersect. The number of intersections between the segments is $O(n^2)$, so, computing the length of the curve inside each of them by scanning the edges of the curve takes $O(n)$, resulting in $O(n^3)$ time.

\paragraph{Algorithm}

The algorithm has two steps:
\begin{enumerate}
\item In the first step we compute a range for the candidate of the solution.
\begin{itemize}
\item[Continuous] In the presented algorithm in~\cite{aghamolaei2020windowing}, the range of radius $r$ is continues and varies from $\delta$ to $L$, where $\delta$ is the closest pair distance between the points on two non-intersecting edges.
The lower bound ($\delta$) is the distance between the closest pair from non-intersecting edges. By checking the distances between all (vertex,vertex) and (vertex,edge) pairs, this can be computed in $O(n^2)$ time.

\item[Discrete] In Algorithm of \Cref{sec:exactalg}, the range of search is discrete. This range is determined by $O(n^4)$ events.
\end{itemize}

\item
In the second step, we check the packedness of the candidates from the first step. We can use any of the existing algorithms. 
\begin{itemize}
\item[Continuous] For each value of $r$ (from a discrete or continues range), we solve the popular places problem on $P$. This was used in~\cite{aghamolaei2020windowing}.
\item[Discrete] For each event, we compute the length query.
\end{itemize}
\end{enumerate}

\paragraph{Analysis}
\begin{itemize}
\item[Continuous]
Let $T(n)$ denote the preprocessing time and $S(n)$ be the size of an AQD data-structure in the arrangement of a set of curves. Note that the arrangement of the Minkowski sum of the shapes with the input curve has complexity $O(n^2)$, assuming the query shape has constant complexity.

The time complexity of the length query algorithm is
$
O(\log_{1+\epsilon}R) (T(n^2)+\textrm{O}(S(n^2))),
$
which is the number of candidate values checked to get a $(1+\epsilon)$-approximation of the radius, times the time complexity of computing the AQD and checking its cells for the solution.
\item[Discrete]
Let $Q(n)$ be the query time of a length query.
The running time of the discrete algorithm is the number of events times the time complexity of computing the length query for each event, which is $E\times Q(n)$.

For a set of disks, $E=O(n^4)$ (see~\Cref{sec:exactalg}), and using the naive algorithm for length query $Q(n)=O(n)$, resulting in the time complexity $O(n^5)$.
\end{itemize}
\paragraph{Higher Dimensions}
Using the Minkowski sum in $\mathbb{R}^d$~\cite{deza2018diameter}, 
 the AQD of a curve in $\mathbb{R}^d$ can be built by considering the intersection of a set of polyhedras. Then, the cell with the maximum length gives the solution.

\paragraph{Massively Parallel Computation}
The time complexity of minimum $c$-packedness~\cite{aghamolaei2020windowing} in MPC rounds is the logarithm of the range of the candidate radii checked by the algorithm which is $O(\log R)=O(\log \frac{L}{\delta})$, since the popular places problem can be solved in polynomial time after partitioning among the machines, if the number of intersections of the curve with a vertical line is at most $k$~\cite{aghamolaei2020windowing}.

\subsection{Approximating $c$-packedness of Convex Polygons and Disks}

\paragraph{Polygons}
In~\cite{aghamolaei2020windowing}, first the AQD is computed and the set of intersected cells are stored at each cell. Then, they the maximum length is computed over all the cells. This algorithm takes $O(\frac{\log (L/\delta)}{\epsilon}n^3\log n)$ time. 

\paragraph{Disks}
	In~\cite{aghamolaei2020windowing}, for a disk $X$,
	the length of the curve within $X$ is
	bounded by the length of the curve inside the largest inscribed regular $k$-gon in $X$	and the length of the curve inside the smallest circumscribing regular $k$-gon of $X$. For $k=\frac {\pi} {\sqrt{\epsilon}}$,
	  the radius of the circumscribing circle of the regular $k$-gon would be at most $1+\epsilon$ times the radius of the inscribed circle. This gives a $(1+\epsilon)$-approximation.
	
The running time of the $2(1+\epsilon)$-approximation of~\cite{aghamolaei2020windowing}(\Cref{alg:cpackedness} in this paper, Algorithm~3 in~\cite{aghamolaei2020windowing}) is $O(\frac{\log (L/\delta)}{\epsilon}n^3\log n)$, which is the time cost of the search on the range of the radius, multiplied by the time complexity of solving the maximum length query problem for each radius . \Cref{alg:cpackedness} uses the improved range discussed in~\Cref{sec:general} instead of the original bounds in~\cite{aghamolaei2020windowing}.
\begin{algorithm}[t]
	\caption{Approximating $c$-Packedness for Disks~\cite{aghamolaei2020windowing} (Slightly Modified)}
	\label{alg:cpackedness}
	\begin{algorithmic}[1]
		\Require{A curve $P=\{P_1,\ldots,P_n\}$, a constant $\epsilon>0$}
		\Ensure{The minimum $c$ for which $P$ is $c$-packed}
		\State{$c=0, \epsilon\gets \epsilon/2$}
		\State{$L=\max_{p,q\in (\cup_{i=1}^n\{P_i\})} \lVert p-q\rVert$, $\delta=\min_{\overline{P_iP_{i+1}}\ni p} \min_{\overline{P_jP_{j+1}}\ni q,\overline{P_iP_{i+1}} \cap \overline{P_jP_{j+1}} =\emptyset} \lVert p-q\rVert$}
		\For{$i=0,\ldots,\lceil\log_{1+\epsilon} L/\delta\rceil$}
		\State{$r=\delta (1+\epsilon)^i$}
		\State{$Q=$ a regular polygon with $\lceil \frac{\pi}{\sqrt{\epsilon}}\rceil$ vertices}
		\State{cost= the maximum length query in the AQD of $Q$.}
		\State{$c= \max\{c,\frac{cost}{r}(1+\epsilon)\}$}
		\EndFor
		\\ \Return{$c$}
	\end{algorithmic}
\end{algorithm}
The complexity of the shape $Q$ is $\lceil \frac {\pi} {\sqrt{\epsilon}}\rceil$.
So, computing the maximum length query takes $O((\frac{n}{\sqrt{\epsilon}})^3\log \frac{n}{\sqrt{\epsilon}})$ time, which takes the maximum of the length queries in the AQD cells.
The running time of the algorithm is 
$O(\frac{\log (L/\delta)}{\epsilon^{5/2}}n^3\log \frac{n}{\epsilon})$.

In the proof of Theorem~2 in~\cite{aghamolaei2020windowing}, it is stated that for a specific set of segments that a query object $Q$ intersects, the local maximums of the total sum of the lengths in different placements of $Q$ (or equivalently, the events in the current study) are at the intersections of a set of parabolas with each other or the domain boundary. Such functions are the intersections of segments and the boundaries of $AQD$ cells, so, these intersections are only a function of the parts of the edges that define query shapes with a different set of intersected edges (\Cref{lemma:event}).
We assess the maximum values that are determined by these intersection points to find the $c$-packedness of $Q$.

\begin{lemma}[the proof of Theorem~2 in~\cite{aghamolaei2020windowing}]\label{lemma:event}
For a polygonal curve $P$, one event for each continuous set of translations of a query $Q$ that intersects with the same set of edges is enough to compute the query with the maximum length.
\end{lemma}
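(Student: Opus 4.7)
My plan is to analyze the total-length function $L(t) = \sum_{e \in E(t)} |e \cap Q(t)|$, where $Q(t)$ denotes the translate of the query shape by vector $t$ and $E(t)$ is the set of edges of $P$ that meet $Q(t)$. First I would partition the translation space into cells according to the arrangement induced by the boundaries of the Minkowski sums $e \oplus (-Q)$ for each edge $e$ of $P$; by construction, inside each cell the set $E(t)=E$ is constant, and this is exactly the cell decomposition underlying the AQD. Within one such cell the problem reduces to maximizing $L(t) = \sum_{e \in E} \ell_e(t)$, where each $\ell_e(t)$ is a smooth algebraic function of $t$ (for a disk query, $\ell_e(t)$ is the length of a chord and depends on the signed distance of the line carrying $e$ from the center of $Q(t)$, producing a square-root expression whose level sets are conic arcs—specifically parabolas when $e$'s endpoints both lie inside $Q(t)$).

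Given this structure, I would argue that the maximum of $L$ over the closure of a single cell is attained either (i) on the cell boundary, which is a place where some edge starts or stops intersecting $Q(t)$—itself an event in the AQD—or (ii) at an interior critical point of $L$, where $\nabla L(t)=0$. Case (i) is shared with neighbouring cells and is already an event; for case (ii), a critical point of the sum $\sum_{e} \ell_e$ corresponds algebraically to the simultaneous vanishing of polynomial expressions in the partial derivatives of the individual $\ell_e$'s. Geometrically, these equations describe intersections among the level curves of the $\ell_e$'s (the parabolas referenced in the proof of Theorem~2 of~\cite{aghamolaei2020windowing}), and each combinatorial cell contributes at most one such critical configuration relative to its fixed edge set $E$, since changing $E$ produces a different system of equations and thus a different cell.

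Combining the two cases, the global maximum of $L$ is achieved at a point that is either on a cell boundary or at the one interior critical configuration of some cell; in either case it is captured by the single event associated with the cell in question. Iterating this argument over all cells of the AQD then yields the lemma: it suffices to evaluate $L$ at one event per cell. The step I expect to be the main obstacle is making the interior-critical-point analysis precise enough to guarantee that exactly \emph{one} event per cell suffices, rather than a constant number; this requires noting that within a cell the edge set $E$ is fixed, so the system defining $\nabla L = 0$ has a structure determined solely by $E$, and any additional algebraic critical points produced by multiplicities must either lie outside the cell (and belong to a neighbouring cell) or coincide with boundary events already enumerated. The remaining bookkeeping—checking that degeneracies (parallel edges, tangencies at vertices of $P$) do not spawn extra events—follows from a standard general-position perturbation argument.
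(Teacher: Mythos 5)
Your proposal follows essentially the same route as the paper: decompose translation space into the AQD cells induced by the Minkowski sums $e\oplus(-Q)$, observe that the edge set $E$ is constant per cell, and locate the maximum of the per-cell length function at either a cell boundary or an interior critical configuration of the system of parabolas determined by $E$. Be aware, though, that the paper does not actually prove this lemma itself --- it is imported verbatim from the proof of Theorem~2 of the cited reference, and the surrounding paragraph only paraphrases that argument (local maxima lie at intersections of the parabolas with each other or with the domain boundary). The one step you flag as the main obstacle --- that a single event per cell suffices rather than several interior critical points --- is not resolved by your multiplicity/perturbation remarks, and it is also not resolved in this paper; the clean way to close it is to note that for a convex query each per-edge intersection length $t\mapsto|e\cap Q(t)|$ is concave on the region where the combinatorial type is fixed, so the per-cell sum is concave and its argmax is a single connected set attaining one value, which is exactly ``one event.'' (Compare the paper's own generalization in Lemma~\ref{lemma:event2}, where the analogous uniqueness in the radius parameter is obtained by showing the packedness is monotone in $r$ over the cell, pushing the maximum to the cell boundary.)
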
 

\subsection{$S$-Relative $c$-Packedness}

\paragraph{Exact $S$-Relative Packedness}
Assume a set $S$ of size $m$ and a query shape whose intersection with a segment can be computed in $O(1)$ time are given. The number of query shapes in the definition of $s$-relative packedness is $O(nm)$. Computing the length query by intersecting all the edges of the curve with the query shape takes $O(n)$ time. So, the exact value can be computed in $O(mn^2)$.

\paragraph{Approximating $c$-Packedness with Vertex-Relative $c$-Packedness}

Lemma 5 in \cite{aghamolaei2020windowing} states that
if a curve has length at most $cr$ within all the disks of radius $r$ centered at a point of $P$, then $P$ is $2c$-packed. The proof is done via the observation that any disk of radius $r$ that intersects $P$, can be covered by a disk of radius $2r$ centered at a point of $P$, where its center is in the intersection of two disks; and the length of the curve inside the disk of smaller radius is at most as much as the length of the curve inside the larger disk.

When $S$ is the set of vertices of the curve, the $S$-relative $c$-packedness is called the vertex-relative $c$-packedness.
The $2$-approximation algorithm in~\cite{gudmundsson2020approximating} gives an exact $c$ for vertex-relative $c$-packedness using squares and hypercubes since all possible disks centered at vertex with another vertex on its boundary have been considered. Their algorithm has time complexity $O(n^2\log n)$.

\begin{obs}\label{thm:contradiction}
Vertex-relative $c$-packedness is at least twice $c$-packedness, and the packedness is at least $2$; since any disk centered at an edge has packedness $2$, and the vertex-relative packedness of a single edge is $1$.
\end{obs}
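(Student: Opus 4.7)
The plan is to verify the two self-justifying hints built into the observation by direct computation, since no machinery beyond \Cref{def:cpackedness} and the definition of vertex-relative packedness is required.

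For the first assertion, that the packedness is always at least $2$, I would fix any edge $e=\overline{P_iP_{i+1}}$ of $P$, choose an interior point $p\in e$, and pick $r>0$ smaller than both the distance from $p$ to the two endpoints of $e$ and the distance from $p$ to every other edge of $P$. Such an $r$ exists since $p$ lies in the open interior of a single edge and $P$ has finitely many edges. Then the disk $D(p,r)$ meets $P$ only in $e$, and $D(p,r)\cap e$ is a chord of $D(p,r)$ through its center, so its length is exactly $2r$. Hence $\gamma(D(p,r),P)=2$, and by \Cref{def:problem} the minimum $c$ for which $P$ is $c$-packed is at least $2$.

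For the second assertion, that the factor $2$ separating vertex-relative packedness from packedness (the upper direction given by Lemma~5 of~\cite{aghamolaei2020windowing}) is tight, I would specialize to $P=\overline{P_1P_2}$, a single edge of length $\ell$. The argument above already gives general packedness $2$ for this $P$. For the vertex-relative packedness the only admissible disk centers are $P_1$ and $P_2$; a disk of radius $r\le\ell$ centered at $P_1$ meets $P$ in a single radial subsegment of length exactly $r$, so $\gamma=1$, and for $r>\ell$ it meets $P$ in an interval of length $\ell$, so $\gamma=\ell/r<1$. The same analysis holds symmetrically at $P_2$, so the vertex-relative packedness equals $1$ while the general packedness equals $2$, giving a gap of exactly $2$.

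The only subtlety is ensuring the radius $r$ in the first step is small enough that no other edge of $P$ intrudes into $D(p,r)$; this is immediate from the finiteness of the edge set and the interior placement of $p$, so there is no substantive obstacle.
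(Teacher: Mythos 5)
Your proof is correct and takes essentially the same route as the paper, whose justification consists precisely of the two facts you verify: a small disk centered at an interior point of an edge captures a full diameter's worth of curve (so the packedness is at least $2$), while a single edge has vertex-relative packedness $1$, witnessing that the factor-$2$ gap is tight. The only remark worth adding is that the care you take to exclude other edges from $D(p,r)$ is not needed for the lower bound, since any additional intersected edges only increase the length inside the disk and hence $\gamma(D(p,r),P)\ge 2$ regardless.
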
 
\subsection{Line Sweeping for Minimum Packedness via Continuous Search}
\label{sec:SL}

	When the ratio $L/\delta=O(1)$, the continuous search algorithm outperforms the polynomial ones. So, we discuss this problem as well.
	
	The sweeping events for the case where the center of the disk lies on the curve are when the disk intersects with an edge, or stops intersecting with it. The number of such events is $O(n^2)$, so sweeping over them takes $O(n^2\log n)$ time (similar to the construction of the AQD) and computing the length takes another $O(n)$ time, in the worst-case.

Using the continuous search of~\cite{aghamolaei2020windowing} for approximating the minimum $c$ for which a given curve is $c$-packed, we sweep the curve with a disk of that radius.

\section{A Polynomial-Time Exact Algorithm for $c$-Packedness} \label{sec:exactalg}
Here, we show that disks of maximal size and optimal $c$-packedness can be computed by checking $O(n^4)$ candidate disks.

\Cref{lemma:event}(restated from~\cite{aghamolaei2020windowing}) shows that determining the set of intersected segments (edges of the curve) by a candidate disk of each size is enough to find it exactly.
In~\Cref{lemma:event2}, we generalize that lemma.
\begin{lemma}\label{lemma:event2}
For a polygonal curve $P$, one event for each continuous set of affine transformations of a query disk $Q$ that intersects with the same set of edges $E$, given that the same subset $F$ of $E$ is used to define $Q$, is enough to compute the query with the maximum length.
\end{lemma}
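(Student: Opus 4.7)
My plan follows the blueprint of \Cref{lemma:event} and lifts it from pure translations to the richer parameter space of affine transformations of the query disk $Q$. First I would fix a parameterization of the relevant transformation group acting on $Q$; for a disk, affine transformations collapse (modulo rotational symmetry) to a small number of real parameters, essentially two for the center and one for the radius when shears are excluded, and four when they are not. I would then partition this parameter space into maximal connected regions on which both the intersected edge set $E$ and the defining subset $F\subseteq E$ remain invariant, and call each such region an $(E,F)$-cell. By construction, crossing the boundary of an $(E,F)$-cell is precisely what the algorithm records as an event.

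Next I would analyze the total length function $L(Q)=\sum_{e\in E}|e\cap Q|$ on a single $(E,F)$-cell. Since $E$ is fixed, $L$ is a sum of a constant collection of chord lengths, each of which is a smooth algebraic function of the transformation parameters (a square root of a polynomial), generalizing the parabolic pieces in the proof of \Cref{lemma:event}. Simultaneously, $F$ being fixed means that a constant system of incidence equations pins down $Q$ up to the residual degrees of freedom of the cell, so the local dimension of an $(E,F)$-cell equals the dimension of the transformation group minus the number of independent incidences imposed by $F$.

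The core of the argument is to show that the maximum of $L$ over the whole parameter space is attained on a zero-dimensional $(E,F)$-cell, which corresponds to a single combinatorial event. If an interior maximum occurred in a positive-dimensional cell, the gradient of $L$ with respect to the free parameters would vanish there; by the same computation used in the proof of Theorem~2 of \cite{aghamolaei2020windowing} applied coordinatewise, such a vanishing gradient forces an additional algebraic condition on $Q$ (an extra tangency, or a coincidence of two chord endpoints), which places $Q$ on the boundary of the current cell, i.e., inside an $(E',F')$-cell with a strictly larger defining set. Iterating at most a constant number of times drives us to a zero-dimensional cell, so a single representative event per $(E,F)$-class suffices. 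The main obstacle I anticipate is this critical-point analysis in the higher-dimensional parameter space, since the interaction among free parameters and multiple chord-length terms must be handled simultaneously; I would decouple it by fixing all but one parameter, invoking the one-parameter statement of \cite{aghamolaei2020windowing}, and iterating over the remaining parameters until the cell collapses.
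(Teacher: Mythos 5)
Your overall framing---partitioning the transformation space into $(E,F)$-cells and arguing that one representative per cell suffices---matches the paper's intent, but the engine of your argument has a genuine gap. You claim that at an interior maximum the vanishing gradient \emph{forces an additional algebraic condition on $Q$} (an extra tangency or a coincidence of chord endpoints), pushing $Q$ to the boundary of its cell. That implication does not hold: the sum of chord lengths of a fixed set of segments is a smooth function that can have an interior critical point inside an $(E,F)$-cell with no new incidence appearing (already for a single segment, the chord length is maximized when the center lies on the segment's supporting line, which is generically an interior point of the cell). So your dimension-reduction loop has no reason to terminate at a zero-dimensional cell, and the coordinatewise decoupling you propose at the end only certifies coordinatewise maxima, not the maximum over the cell.

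The paper avoids this entirely by an explicit monotonicity computation rather than a critical-point argument. Within the one-parameter family of disks sharing $(E,F)$ and a center $c$, it writes the growth rate of each chord as $m_s$ and expresses the packedness as $\sum_{s\in E} m_s+\frac{L(c)-\sum_{s\in E} m_s R_1(c)}{r}$; since $R_1(c)$ is the smallest disk of the family (at least one chord has length zero there), the numerator of the second term is nonpositive, so the packedness is nondecreasing in $r$ and is attained at the largest disk $R_2(c)$ of the family, and then, over centers, at the configuration maximizing $R_2(c)$---a single event. Note also that the paper's proof optimizes the ratio of length to radius, which is the quantity the events are used for in \Cref{sec:exactalg}, whereas your analysis maximizes the raw length $L(Q)$; these two optima need not coincide within a cell, so even a repaired version of your critical-point analysis would be aimed at the wrong functional.
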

\begin{proof}
A circle can be defined by the two endpoints of its diameter ($|F|=2$), or three points on its boundary ($|F|=3$).
Let $R_1$ be the radius of the smallest disk with the same sets $E$ and $F$, and let $R_2$ be the radius of the largest such disk.
Let $D(c,r)$ be the disk with the same intersection set $E$ and defined by points from $F$ and of radius $r$ centered at a point $c$, and assume $R_1(c)=\min_r D(c,r)$ and $R_2(c)=\max_r D(c,r)$.
Since the curve is polygonal, using the chord-length formula, the rate of increase of the length of a chord with distance $h$ from $c$ in terms of $r$ is
$
\frac{\mathrm{d}}{\mathrm{d}r} \sqrt{r^2-4h^2}= \frac{r}{\sqrt{r^2-4h^2}}=\frac{r}{l},
$
where $l$ is the length of the chord of the disk of radius $r$ containing $s$, for $r\in [R_1(c),R_2(c)]$. Let $m_s=\frac{1}{l}$.
So, the packedness of $D(c,r)$ is
$
\frac{L(c)+\sum_{s\in E} m_s(r+\Delta r-R_1(c))}{r+\Delta r}=\sum_{s\in E} m_s+\frac{L(c)-\sum_{s\in E} m_s R_1(c)}{r+\Delta r},
$
where $L(c)$ is the length of the curve inside $D(c,R_1(c))$.
This function has its maximum in its domain at radius $R_2(c)$, which is $\sum_{s\in E} m_s+\frac{L(c)-\sum_{s\in E} m_s R_1(c)}{R_2(c)}$.
Taking the maximum packedness of disks $D(c,r)$ with the same sets $E$ and $F$ over different points $c$, gives:
$
\max_c \sum_{s\in E} m_s+\frac{L(c)-\sum_{s\in E} m_s R_1(c)}{R_2(c)}= \sum_{s\in E} m_s+\max_c\frac{L(c)-\sum_{s\in E} m_s R_1(c)}{R_2(c)}.
$
Based on the definition of $L(c)$, at least one of the segments has length $0$ inside $D(c,R_1)$, so $L(c)-\sum_{s\in E} m_s R_1(c)\le 0$. So, the maximum happens when $R_2(c)$ is maximized, which happens for the disk of radius $R_2$ with intersection set $E$ and defined by $F$.
\end{proof}
\paragraph{Computing the disk described in \Cref{lemma:event2}}
If smallest enclosing circle of the endpoints of the segments in $F$ contains one point from each segment in $E$, we are done; otherwise, compute the smallest disk $D_s$ intersecting at least one point of $F$, and call this set of points $F_s$. $D_s$ can be computed by enumerating all $\binom{6}{3}$ or $\binom{6}{2}$ possible cases. Then, find the closest point of each segment in $E$ to the center of $D_s$ and call this set $C$. The smallest enclosing disk of $C\cup F_s$ is the solution and it can be computed in $O(|E|)$ time.

\begin{obs}\label{obs:circle}
Any circle containing a point from each of the three segments $s_u,s_v,s_w$ contains the closest point of those segments to each other.
\end{obs}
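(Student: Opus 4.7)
The plan is to prove the statement by combining convexity of the disk with an extremality characterization of the "closest" triple.  I interpret the closest points of $s_u,s_v,s_w$ to each other as the triple $(p_u^\star,p_v^\star,p_w^\star)\in s_u\times s_v\times s_w$ that realizes the smallest enclosing circle among all such triples; call this optimum disk $D^\star$ with radius $r^\star$.  The observation then asserts that every closed disk $D$ that meets each of $s_u,s_v,s_w$ satisfies $p_u^\star,p_v^\star,p_w^\star\in D$.

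I would record two elementary ingredients first.  (a) A closed disk is convex, so the intersection $D\cap s_i$ is a (possibly degenerate) sub-segment of $s_i$ for each $i\in\{u,v,w\}$; in particular, if one endpoint of the chord equals $p_i^\star$ we are done, and otherwise the chord lies strictly to one side of $p_i^\star$ along $s_i$.  (b) By definition of $D^\star$, any disk intersecting all three segments has radius at least $r^\star$, with equality forcing the intersection witnesses to coincide with $(p_u^\star,p_v^\star,p_w^\star)$.  The main step is then a contradiction argument: suppose $p_u^\star\notin D$ and pick any witnesses $q_u\in D\cap s_u,\;q_v\in D\cap s_v,\;q_w\in D\cap s_w$.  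Appealing to the convexity, in each argument separately, of the smallest-enclosing-circle radius $\rho(p_u,p_v,p_w)$ over $s_u\times s_v\times s_w$, I would slide $(q_u,q_v,q_w)$ along the sub-segments $D\cap s_i$ (which stay inside $D$ by convexity of $D$) toward $(p_u^\star,p_v^\star,p_w^\star)$; each such step weakly decreases $\rho$ and reaches a triple still inside $D$.  Combined with (b), the final triple has enclosing radius at most $\mathrm{rad}(D)$; pushing all three coordinates simultaneously to the respective minimisers of $\rho$ produces a triple in $D$ whose enclosing disk has radius at most $r^\star$, which by uniqueness of the smallest enclosing circle forces $(p_u^\star,p_v^\star,p_w^\star)$ themselves to lie in $D$.

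The main obstacle I anticipate is the formalisation of the per-coordinate convexity of $\rho$ and the handling of the combinatorial cases where $D^\star$ is pinned by only two of the three segments, since then $p_i^\star$ for the inactive segment may be a segment endpoint and the simple "slide" step degenerates.  A secondary subtlety is that if the intended meaning of \emph{closest point of those segments to each other} is the pairwise closest pairs instead of the smallest-enclosing-circle witnesses, the same convexity framework adapts with $\rho$ replaced by pairwise distance, but one must re-verify the monotonicity along the chord $D\cap s_i$.  In either interpretation the proof reduces to the observation that a convex function on a convex domain attains its minimum at a point that any larger sub-level set must contain.
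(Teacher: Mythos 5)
The paper states this observation without any proof, so there is nothing to compare your argument against; judged on its own terms, however, your proof has a genuine gap at its decisive step. After sliding the witnesses $(q_u,q_v,q_w)$ inside the chords $D\cap s_i$, per-coordinate convexity of $\rho$ only entitles you to a triple in $D$ that minimizes $\rho$ over the product of \emph{chords} $\prod_i (D\cap s_i)$, not over $\prod_i s_i$: if $p_u^\star\notin D$, the slide along $s_u$ is blocked at the endpoint of $D\cap s_u$ nearest to $p_u^\star$, and the constrained minimizer need not coincide with the global one. The sentence ``pushing all three coordinates simultaneously to the respective minimisers of $\rho$ produces a triple in $D$'' is therefore not a consequence of anything established before it --- it is precisely the assertion to be proved.

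No repair is possible, because the statement is false under your interpretation (and under the pairwise-closest-pair reading as well). Take $s_u$ from $(0,0)$ to $(10,0)$, $s_v$ from $(0,1)$ to $(10,2)$, and $s_w$ from $(0,-1)$ to $(10,-2)$. Every point of $s_v$ lies at distance at least $2$ from every point of $s_w$, so $r^\star=1$, attained only by the triple $\bigl((0,0),(0,1),(0,-1)\bigr)$, which also realizes all three pairwise closest pairs. Yet the closed disk of radius $2$ centered at $(10,0)$ contains $(10,0)\in s_u$, $(10,2)\in s_v$, $(10,-2)\in s_w$ and none of those three optimal points. What the surrounding enumeration argument (Algorithm~\ref{alg:events} and \Cref{lemma:quad}) actually needs is a much weaker and true fact --- e.g.\ that a disk meeting a segment $s$ contains the point of $s$ closest to the disk's center, or a version of the claim restricted to the \emph{smallest} disk meeting all three segments. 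The observation should be reformulated rather than proved as stated; your anticipated ``secondary subtlety'' about the intended meaning of ``closest point'' is in fact the central problem.
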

So, we focus on finding the sets described in~\Cref{lemma:event2}.
Let $C(u,v,w)$ denote the smallest circle containing points $u,v$, and $w$.
\begin{algorithm}[h]
\caption{Enumerating The Events of Exact Minimum $c$-Packedness}
\label{alg:events}
\begin{algorithmic}[1]
\For{each $i,j\in 1,\ldots,n$}
\State{$p_i,p_j=$ the closest points of $\overline{P_iP_{i+1}}$ and $\overline{P_jP_{j+1}}$ to each other.}
\State{$dp[0]=1$}
\For{$k=1,\ldots,n$}
\State{$p_k=$ the point $p$ on segment $s_k$ with the smallest $C(p_i,p_j,p)$.}
\State{$dp[k]=\sum_{\substack{0\leq z<k,\\ p_k\in C(p_i,p_j,p_z)}} dp[z]$}
\If{$\forall z<k, p_k \notin C(p_i,p_j,p_z)$}
\State{$dp[k]=dp[k]+1$}
\EndIf
\EndFor
\EndFor
\end{algorithmic}
\end{algorithm}
\begin{lemma}\label{lemma:quad}
The recurrence relation of \Cref{alg:events} finds $O(n^2)$ events for each pair of segments $s_i$ and $s_j$.
\end{lemma}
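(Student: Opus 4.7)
The plan is to view Algorithm~\ref{alg:events} as an enumeration of exactly those events from Lemma~\ref{lemma:event2} whose defining set $F$ contains $\{s_i,s_j\}$, and to count these by counting the ordered pairs $(z,k)$ inspected by the recurrence. There are two kinds of events to handle. When $|F|=2$ with $F=\{s_i,s_j\}$, the unique defining disk is the one of diameter $\overline{p_ip_j}$; this single event corresponds to the base case $dp[0]=1$. When $|F|=3$ with $F=\{s_i,s_j,s_k\}$, Observation~\ref{obs:circle} forces the defining circle to pass through the mutually-closest points of the three segments, so by the canonical choice of $p_k$ on line~5, the defining circle is exactly $C(p_i,p_j,p_k)$, and every such event is associated with a unique index $k\in\{1,\ldots,n\}$.

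The pair-counting argument is direct. For each $k$, the inner loop inspects every predecessor $z<k$ to test the containment $p_k\in C(p_i,p_j,p_z)$, yielding $k-1$ candidate pairs, plus at most one extra ``fresh'' case triggered by the $+1$ branch. Summing gives $\sum_{k=1}^{n}(k-1)+n=O(n^2)$ inspected pairs per fixed $(i,j)$. Each such pair is attached to a distinct event: either an event whose circle $C(p_i,p_j,p_k)$ enlarges the intersection set of $C(p_i,p_j,p_z)$ by the segment $s_k$, or a fresh event whose intersection set contains no smaller-indexed predecessor.

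To complete the argument I would verify that every event with $\{s_i,s_j\}\subseteq F$ is actually produced by this enumeration. The circles through $p_i$ and $p_j$ form a one-parameter pencil, totally ordered by radius on each side of the chord $\overline{p_ip_j}$; as the radius increases along the pencil, the intersection set grows monotonically, changing precisely when some $p_l$ enters. This places the events in one-to-one correspondence with ordered pairs $(z,k)$, where $z$ denotes the predecessor in the pencil and $k$ the newly-added index, matching exactly the pairs the recurrence examines.

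The main obstacle will be making this bijection between events and DP pairs tight enough to preclude both missing events and double-counting. The delicate issue is that the DP's segment order $1,2,\ldots,n$ need not coincide with the pencil's radius order, so one might worry that the recurrence ``skips'' pencil events between consecutive indices or visits them out of order. The resolution is to exploit the uniqueness of $p_k$ as the radius-minimizer on $s_k$ together with the monotonicity of circle containment along the pencil; these together allow one to attribute every event to its latest-added defining segment $s_k$ and to a specific predecessor $z$ that the recurrence does inspect, yielding the desired $O(n^2)$ bound.
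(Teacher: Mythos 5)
Your proposal follows essentially the same route as the paper's own proof: both attribute each event to the canonical circle $C(p_i,p_j,p_k)$ via \Cref{obs:circle} and bound the number of events by the $O(n^2)$ pairs $(z,k)$ that the recurrence inspects, with the diameter circle through $p_i$ and $p_j$ as the base case. Your pencil-of-circles monotonicity argument merely makes explicit the completeness step that the paper's induction asserts with ``this is all the possibilities,'' so the two arguments coincide in substance.
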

\begin{proof}
Using induction on the number of edges checked so far. At step $k\leq n$, the set of edges of the curve that can be used to create the events are $s_1,\ldots,s_k$.
The circle determined by $p_i$ and $p_j$ as the endpoints of its diameter is the base case.
Based on \Cref{obs:circle}, since $p_i,p_j$ and $p_k$ are known, there are two cases:
\begin{itemize}
\item $C(p_i,p_j,p_k)$ determines a circle that was not given by any subset of $s_1,\ldots,s_{k-1}$, i.e. it is a new event.
\item $\exists k': C(p_i,p_j,p_k')$ determines the circle intersecting $s_k$, i.e. there is no circle that intersect $s_i,s_j$ and $s_k$ but not any of $s_z, z<k$.
\end{itemize}
This is all the possibilities of the segments being intersected, which is what the recurrence in the algorithm computes.
\end{proof}

Similarly, the other endpoint of the segments can be considered for removing the segment from the set of intersected segments, and the number of events is still $O(n^2)$, since the number of involved endpoints in computing the events has changed from $n$ to $2n$.

\begin{theorem} \label{thm:events}
The number of events is $O(n^4)$.
\end{theorem}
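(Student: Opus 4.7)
The plan is to deduce the $O(n^4)$ bound by summing the per-pair count from Lemma~\ref{lemma:quad} over all pairs of defining segments. The setup is provided by Lemma~\ref{lemma:event2}: every disk that can realize a local maximum of packedness is characterized by an intersection set $E$ together with a defining subset $F\subseteq E$ of size $2$ or $3$, and only one representative per equivalence class $(E,F)$ needs to be enumerated. Algorithm~\ref{alg:events} fixes two base segments $s_i,s_j$ (contributing the defining points $p_i,p_j$) and inductively introduces each remaining segment $s_k$, branching according to whether $p_k$ witnesses a new circle $C(p_i,p_j,p_k)$ or falls inside some previously generated $C(p_i,p_j,p_z)$.

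First I would invoke Lemma~\ref{lemma:quad} directly: for each ordered pair $(s_i,s_j)$ the recurrence yields at most $O(n^2)$ distinct ``addition'' events, i.e.\ events at which a disk grows to newly touch some segment. Since there are $\binom{n}{2}=O(n^2)$ such pairs, these account for $O(n^2)\cdot O(n^2)=O(n^4)$ events overall. Then I would handle the symmetric case of ``removal'' events, where the sweeping disk stops intersecting a segment in its current set. As noted just before the theorem, this is obtained by re-running the same recurrence with the far endpoint of each segment tracked in place of the near one; the pool of relevant endpoints merely doubles from $n$ to $2n$, so the argument of Lemma~\ref{lemma:quad} still produces $O(n^2)$ events per pair, and multiplying by $O(n^2)$ pairs gives another $O(n^4)$ events.

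The only delicate point, rather than an obstacle, is completeness: I must argue that every candidate disk the exact algorithm needs to test is one of the events enumerated here. This is what Lemma~\ref{lemma:event2} buys, since the local maxima of packedness on any continuous family of disks sharing a common $(E,F)$ are attained at the boundary disks produced by the recurrence. Adding the $O(n^4)$ addition events to the $O(n^4)$ removal events yields the stated $O(n^4)$ bound on the total number of events.
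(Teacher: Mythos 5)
Your proposal is correct and follows essentially the same route as the paper: sum the $O(n^2)$ per-pair bound from Lemma~\ref{lemma:quad} over the $\binom{n}{2}$ choices of base segments, fold in the removal events via the doubled-endpoint observation, and rely on Lemma~\ref{lemma:event2} for completeness. The only detail the paper adds is that each event may be generated by up to $\binom{3}{2}=3$ of these subroutines, a constant-factor overcount that does not affect the $O(n^4)$ bound.
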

\begin{proof}
Choosing two segments $s_i$ and $s_j$ has $\binom{n}{2}=O(n^2)$ possibilities, and using \Cref{lemma:quad}, there are $O(n^2)$ events for each of them. This is $O(n^4)$. Each event is counted at most via $\binom{3}{2}=3$ subroutines, which is $O(1)$.
\end{proof}

For any of the computed events, sets $E$ and $F$ are constructed, and the largest disk described in \Cref{lemma:event2} is computed, which takes $O(n)$ time per event, and the algorithm runs in $O(n^5)$ time.
Assuming the packedness of each event is computed at its construction time, and only the maximum is stored, the space complexity of the algorithm is $O(n)$.
\subsection{Packedness using Fat Shapes}
A shape enclosed inside two concentric disks of radii $\rho$ and $\alpha\rho$ is called $\alpha$-fat~\cite{agarwal1995computing}. Approximating an $\alpha$-fat shape with the disk of radius $\rho$ in this definition, gives an $\alpha^2$-approximation for minimum $c$-packedness:
\begin{theorem}\label{theorem:fat}
The packedness of the smaller disk in the definition of an $\alpha$-fat shape is an $\alpha^2$-approximation for the packedness of the shape.
\end{theorem}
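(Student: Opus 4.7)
The plan is to exploit the two-sided concentric containment $D_\rho \subseteq X \subseteq D_{\alpha\rho}$ that comes directly from the definition of $\alpha$-fatness, combined with the monotonicity of curve-length under set inclusion. For every translation $t$, this yields the chain
\[
\bigl|P \cap (t+D_\rho)\bigr| \;\le\; \bigl|P \cap (t+X)\bigr| \;\le\; \bigl|P \cap (t+D_{\alpha\rho})\bigr|,
\]
which is the only geometric fact needed; the rest is bookkeeping with the two radii that appear as denominators in $\gamma(D_\rho,P)$ and $\gamma(X,P)$ (namely $\rho$ and the smallest enclosing radius of $X$, which is at most $\alpha\rho$).

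I would handle the two sides of the approximation separately. The lower side is immediate from the left inequality, since the translation that realises $\gamma(D_\rho,P)$ puts at least as much curve-length inside the corresponding translate of $X$. For the upper side --- the main step --- I would bound $|P \cap (t+D_{\alpha\rho})|$ using a standard covering: $D_{\alpha\rho}$ can be covered by $\Theta(\alpha^2)$ translates of $D_\rho$, so summing the per-disk bound $\gamma(D_\rho,P)\,\rho$ over this covering yields $|P \cap (t+D_{\alpha\rho})| \le O(\alpha^2)\,\gamma(D_\rho,P)\,\rho$. Chaining with the right inequality in the display and dividing by the enclosing radius of $X$ then bounds $\gamma(X,P)$ by $O(\alpha^2)\,\gamma(D_\rho,P)$, and combining the two directions shows that the two packedness values agree up to a multiplicative factor of $\alpha^2$.

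The main obstacle is the covering step: one must verify that $\Theta(\alpha^2)$ translates of $D_\rho$ actually cover $D_{\alpha\rho}$ (the area lower-bound forces $\Omega(\alpha^2)$, and an explicit grid covering achieves $O(\alpha^2)$), and then track the two radii carefully through the definitions of $\gamma(\cdot,P)$ so that the final loss matches the $\alpha^2$ claimed in Table~\ref{table:results} rather than a weaker power of $\alpha$. A secondary point to confirm is that the argument is insensitive to whether $r_X$ equals $\alpha\rho$ exactly or is smaller, since any smaller $r_X$ only improves the bound on $\gamma(X,P)$.
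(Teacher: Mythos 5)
Your containment chain $|P\cap(t+D_\rho)| \le |P\cap(t+X)| \le |P\cap(t+D_{\alpha\rho})|$ and your lower-side argument match the paper, but the covering step on the upper side is a genuine gap: it proves a weaker statement than the theorem. Covering $D_{\alpha\rho}$ by $\Theta(\alpha^2)$ translates of $D_\rho$ and summing the per-disk bound $c\rho$ gives $|P\cap(t+D_{\alpha\rho})|\le \Theta(\alpha^2)\,c\,\rho$, hence $\gamma(X,P)\le \Theta(\alpha^2)\,c$ after dividing by $r\ge\rho$; combined with your lower bound $\gamma(X,P)\ge c/\alpha$, the two sides differ by $\Theta(\alpha^3)$, not $\alpha^2$. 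This is exactly the ``weaker power of $\alpha$'' you flag at the end, and it is not a bookkeeping artifact that careful tracking can remove: if you only use the packedness of radius-$\rho$ disks, the bound $\Theta(\alpha^2)c\rho$ is essentially tight (tile $D_{\alpha\rho}$ by $\Theta(\alpha^2)$ cells of diameter $\rho$, each holding $\Theta(c\rho)$ of curve; every radius-$\rho$ disk meets $O(1)$ cells and sees $O(c\rho)$, yet the large disk holds $\Theta(\alpha^2 c\rho)$). So the covering route cannot deliver the claimed factor.

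The missing idea is that no covering is needed, because the enclosing region $D_{\alpha\rho}$ is itself a disk and $c$-packedness for disks is a multi-scale statement quantified over \emph{all} radii: the length of $P$ inside any disk of radius $\alpha\rho$ is at most $c\cdot\alpha\rho$ directly from the definition. That is the paper's argument: with $l$ the length of $P$ inside the relevant translate of $X$ and $r\in[\rho,\alpha\rho]$ its enclosing radius, one gets $c\rho\le l\le c\alpha\rho$ and hence $c/\alpha\le l/r\le c\alpha$, a two-sided factor-$\alpha$ sandwich whose ratio is the claimed $\alpha^2$. Replace your covering step with this one-line appeal to the definition of disk packedness at radius $\alpha\rho$ and the rest of your bookkeeping goes through unchanged.
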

\begin{proof}
Let $P$ be a $c$-packed curve, and $D$ be the smaller disk of the optimal translation of query $Q$. Let $l$ be the length of $P$ inside $Q$, and $r$ be the radius of $Q$. Then, $l$ is at least as much as the length of the curve inside $D$, which is $c\rho$, and the radius of $Q$ is at most $\alpha \rho$. So, the approximation factor is $\alpha^2$:
$
c\rho \le \gamma(Q,P) \le c\alpha\rho,\;\rho \le r \le \alpha \rho \Rightarrow \frac{c}{\alpha} \le \frac{l}{r} \le c\alpha.
$
\end{proof} \vspace{-0.4cm}
\subsection{Vertex-Relative $c$-Packedness of Disks}\label{theorem:improved}
In \Cref{alg:cpackedness2}, we use the corollary of~\Cref{lemma:event2} (\Cref{cor:lemma2}) to show $c$-packedness and vertex-relative $c$-packedness are within factor $2$ of each other.
\begin{corollary}\label{cor:lemma2}
Vertex-relative $c$-packedness is at most twice $c$-packedness.
Using \Cref{lemma:event2}, there is an optimal disk defined by the vertices of the curve, so, there is a disk centered at one of those vertices with twice the radius that covers it.
\end{corollary}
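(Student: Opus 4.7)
The plan is to use \Cref{lemma:event2} to locate an extremal disk realizing the $c$-packedness of $P$, and then show that a vertex-centered disk of twice the radius contains it, which will give the factor-$2$ comparison between $c$-packedness and vertex-relative $c$-packedness.

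First, I would invoke \Cref{lemma:event2} to replace an optimal disk $D^{\star}$ (one realizing the $c$-packedness $c^{\star}$ of $P$) by the largest disk in its one-parameter family, i.e., the disk of radius $R_2$ among those sharing the same intersection set $E$ of edges and the same defining subset $F$. The proof of \Cref{lemma:event2} already establishes that within such a family the packedness is maximized at $R_2$, so this extremal disk still realizes $c^{\star}$.

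Next, I would argue that this extremal disk $D^{\star}$ has at least one vertex of $P$ on its boundary. If every point of $F$ lay in the relative interior of its defining edge, those contact points could be slid along their edges to enlarge the disk while keeping $E$ and $F$ unchanged, contradicting the maximality of the radius. The hard part is making this rigorous: it calls for a short case analysis depending on whether $|F|=2$ (the two points pin a diameter) or $|F|=3$ (the three points pin the circumscribed circle), together with a check that the obstruction to further growth is a contact point hitting an edge endpoint (a vertex of $P$) rather than a segment entering or leaving $E$; in the latter degenerate case a small limiting argument recovers a vertex on the boundary of the limiting disk, since $R_2$ is an extremum of a continuous parameterization.

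Finally, once a vertex $v$ of $P$ is known to lie on the boundary of $D^{\star}$, the center of $D^{\star}$ is at distance exactly $r^{\star}$ from $v$, so the disk $D'$ of radius $2r^{\star}$ centered at $v$ contains $D^{\star}$ by the triangle inequality. The length of $P$ inside $D'$ is therefore at least the length inside $D^{\star}$, namely $c^{\star} r^{\star}$, giving $D'$ packedness at least $c^{\star}/(2r^{\star}) \cdot r^{\star} = c^{\star}/2$. Because $D'$ is centered at a vertex, this shows the vertex-relative $c$-packedness is at least $c^{\star}/2$, equivalently $c^{\star} \le 2 \cdot (\text{vertex-relative } c\text{-packedness})$, which is the factor-$2$ comparison recorded by \Cref{cor:lemma2}.
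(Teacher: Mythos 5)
Your overall route is the same as the paper's (the paper compresses the whole argument into one sentence): use \Cref{lemma:event2} to pass to an extremal optimal disk, argue that a vertex of $P$ lies on its boundary, and then cover it by a vertex-centered disk of twice the radius. Your final step is correct and is the only part the paper spells out: if $v$ is a vertex with $\lVert v-c^\ast\rVert\le r^\ast$ for the center $c^\ast$ of an optimal disk $D^\ast$, the triangle inequality gives $D^\ast\subseteq D'=D(v,2r^\ast)$, hence $\gamma(D',P)\ge c^\ast r^\ast/(2r^\ast)=c^\ast/2$, which is the factor-$2$ relation actually used downstream (note the corollary's wording states the trivial direction; the inequality you prove, $c$-packedness $\le 2\cdot$ vertex-relative packedness, is the one that matters).

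The gap is exactly where you flag it: the claim that the extremal disk has a vertex on its boundary. Your sliding argument treats the obstruction to enlarging the disk as a contact point of $F$ reaching an edge endpoint, but the actual obstruction defining $R_2$ in \Cref{lemma:event2} is combinatorial: growth stops when the intersection set $E$ would change, and a new edge can enter $E$ tangentially at a point in its relative interior, in which case no vertex appears on the boundary and your ``small limiting argument'' does not recover one. The clean fix is to chain across families rather than stay inside one: first dispose of the easy case that some vertex already lies in the closed disk $D^\ast$ (then it is within distance $r^\ast$ of the center and you are done). Otherwise every edge meeting $D^\ast$ crosses it as a full chord; growing the disk about its fixed center, each chord length $2\sqrt{r^2-h^2}$ has $\frac{d}{dr}\bigl(2\sqrt{r^2-h^2}\bigr)\cdot r-2\sqrt{r^2-h^2}=2h^2/\sqrt{r^2-h^2}\ge 0$, so $L(r)/r$ is non-decreasing, and a new edge entering $E$ as a chord only adds length. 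Hence the packedness never decreases until the first moment a vertex reaches the boundary of the growing disk, which must happen since the curve is bounded; that disk is still optimal and carries a vertex at distance exactly its radius from the center. With that replacement for your second step, the proof is complete.
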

We modify the $2$-approximation algorithm of~\cite{gudmundsson2020approximating} for squares and hypercubes to use \Cref{cor:lemma2}, which improves the running time to $O(n^2)$.

\begin{algorithm}[t]
	\caption{Relative $c$-Packedness for Disks}
	\label{alg:cpackedness2}
	\begin{algorithmic}[1]
		\Require{A curve $P=\{P_1,\ldots,P_n\}$}
		\Ensure{The minimum $c$ for which $P$ is vertex-relative $c$-packed}
		\State{$L_x=$ the edges of $P$ sorted on $x$, $L_y=$ the edges of $P$ sorted on $y$.}
		\For{$i=1,\ldots,n$}
			\State{$L=$ merge $L_x$ and $L_y$ based on their distance to $p$}
			\State{Traverse $L$ and update the length $\ell$ and radius $r$ at each point.}
			\State{$c=\max(c,\frac{\ell}{r})$}
		\EndFor
	\end{algorithmic}
\end{algorithm}
The list $L$ is sorted based on the distances from $p$, since increasing the radius is a monotone movement in the directions of the axes, so the intersection of the disk with a line through its center and parallel to each axis is also monotone. A similar idea was used for squares in~\cite{gudmundsson2013algorithms}, but the algorithm was different.

Updating the lengths is possible by updating $\sum_{s\in E} m_s$ from \Cref{lemma:event2}, so it takes $O(1)$ time. So, the algorithm takes $O(n^2)$ time.

\subsection{An MPC Algorithm for Vertex-Relative $c$-Packedness of Disks}\label{theorem:parallel}
In \Cref{alg:parallel}, the parallel version of \Cref{alg:cpackedness2} is given. 
\begin{algorithm}[h]
	\caption{Parallel Relative $c$-Packedness for Disks}
	\label{alg:parallel}
	\begin{algorithmic}[1]
		\Require{A curve $P=\{P_1,\ldots,P_n\}$}
		\Ensure{The minimum $c$ for which $P$ is vertex-relative $c$-packed}
		\State{Sort the edges to build $L_x$ and $L_y$.}
		\State{Run a parallel-prefix for each point $P_i$ to compute the packedness of disks centered at $P_i$ and one of $P_j, j=1,\ldots,n$ on their boundaries.}
		\State{Run a parallel semi-group to take the maximum.}
	\end{algorithmic}
\end{algorithm}
The algorithm uses two sortings, $n$ simultaneous parallel-prefix computations and a semi-group, each of which take $O(\frac{1}{\eta})$ rounds~\cite{goodrich2011sorting,goodrich2011sorting2}, so the algorithm also takes $O(\frac{1}{\eta})$ rounds. \vspace{-0.4cm}
\subsection{An Algorithm Using WSPD}\label{thm:wspd}
A pair of sets $(A,B)$ of points are $s$-separated~\cite{wspd}, if disks $C_A$ and $C_B$ of radius $\rho$ containing the bounding boxes of $A$ and $B$ fits inside them and the distance between $C_A$ and $C_B$ is at least $s\rho$.
A well-separated pair decomposition (WSPD)~\cite{wspd} of a point-set $S$ is a set of $s$-separated pairs $\{(A_i,B_i)\}_{i=1}^{m}$ from $S$, such that for any points $p,q\in S$, there is exactly one index $i$ such that $p\in A_i$ and $q\in B_i$, or $p\in B_i$ and $q\in B_i$. The size of a WSPD is $m=O(s^2 n)$.
A disk defined by a WSPD pair $(A_i,B_i)$ has its center in $A_i$ and covers $B_i$.
\begin{lemma}\label{lemma:wspd}
The relative packedness of the disks defined by WSPD pairs is a $(2+\epsilon)$-approximation for the relative packedness and there are $O(n/\epsilon^2)$ disks.
\end{lemma}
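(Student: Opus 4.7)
For the cardinality, the plan is to take the standard WSPD with separation parameter $s = \Theta(1/\epsilon)$, yielding $m = O(s^2 n) = O(n/\epsilon^2)$ well-separated pairs, each contributing one representative disk---the disk whose center is a chosen vertex $p_i \in A_i$ and whose radius $r_i$ is the smallest value so that the disk encloses $B_i$.

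For the approximation, I would sandwich the vertex-relative packedness $c_{vr}$ in two stages. Fix $D^* = D(p^*, r^*)$, with $p^*$ a vertex, attaining $c_{vr}$. The first stage reduces $r^*$ to a vertex-distance event. The length function $r \mapsto L(r) := |P \cap D(p^*, r)|$ is piecewise smooth with breakpoints at the distances $|p^* q|$ to other vertices $q$ and at tangencies to segments; by the chord analysis used in \Cref{lemma:event2}, on any event-free interval $[r_i, r_{i+1}]$ the ratio satisfies $L(r)/r \le L(r_{i+1})/r \le (r_{i+1}/r_i)\, L(r_{i+1})/r_{i+1}$. Because the WSPD groups vertex pairs by distance scale, consecutive representative radii can be chosen so that $r_{i+1}/r_i \le 2$, which is precisely where the stated factor $2$ enters.

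The second stage maps the sandwiching pair $(p^*, q^*)$, where $q^*$ is the vertex realizing the event radius, to its unique WSPD pair $(A_i, B_i)$. Well-separatedness with common enclosing-disk radius $\rho$ gives $|p^* - p_i| \le 2\rho$, $|q^* - q_i| \le 2\rho$, and $|p^* q^*| \ge s\rho$, so the representative disk $D(p_i, r_i)$ and the event disk $D(p^*, |p^* q^*|)$ differ in centers and radii by $O(\rho) = O(|p^* q^*|/s)$. Their symmetric difference lies in an annulus of width $O(\rho)$; applying the vertex-relative hypothesis to the two concentric disks of radii $|p^* q^*| \pm O(\rho)$ bounds the curve-length there by $O(c_{vr}\, |p^* q^*|/s)$. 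Dividing by $r_i = (1 \pm O(1/s))\, |p^* q^*|$ contributes an additional $(1 + O(1/s))$ factor, which together with stage~$1$ is absorbed into the $\epsilon$ slack by choosing $s = \Theta(1/\epsilon)$.

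The main obstacle will be stage~$1$: rigorously bounding the loss between $\sup_r L(r)/r$ and the event-based maximum by a factor at most $2$. This requires controlling interior smooth maxima arising from sums of chord terms of the form $2\sqrt{r^2 - h_s^2}$ and endpoint-inside contributions, and arguing that the WSPD's representative radii form a factor-$2$ net of the true event radii, so no single interval between consecutive representatives accumulates more loss. Stage~$2$ is then a routine application of the standard WSPD containment relations and the annulus-length bound.
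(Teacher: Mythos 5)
Your setup (a WSPD with separation $s=\Theta(1/\epsilon)$, hence $O(s^2n)=O(n/\epsilon^2)$ pairs and one candidate disk per pair) matches the paper, but your stage~1 --- which you yourself flag as the main obstacle --- rests on a claim that is false and is not where the factor $2$ actually comes from. A WSPD gives no guarantee that the candidate radii form a factor-$2$ net: from a fixed vertex $p^*$ the distances to the other vertices, and hence the representative radii available at $p^*$, can have consecutive ratios that are arbitrarily large, so the inequality $L(r)/r\le (r_{i+1}/r_i)\,L(r_{i+1})/r_{i+1}$ yields nothing. The paper needs no net at all: by the computation in \Cref{lemma:event2}, on every event-free interval $[R_1,R_2]$ the packedness equals $\sum_{s\in E}m_s+\bigl(L(c)-\sum_{s\in E}m_sR_1\bigr)/r$ with a nonpositive numerator in the second term, so $\ell(r)/r$ is nondecreasing on the interval and its supremum over all $r$ is attained exactly at an event radius, i.e., at a disk with a second vertex $q$ on its boundary. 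That reduction is lossless.

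The factor $2$ instead enters in what you call stage~2: writing $d$ for the diameter of the optimal disk $D^*$ and $\rho$ for the WSPD ball radius, the candidate disk $D$ covering $A_i\cup B_i$ contains $D^*$ and has radius at most $4\rho+d$, so $\gamma(D)\ge \ell/(4\rho+d)\ge \frac{\ell}{d/2}\cdot\frac{1}{2+8/s}=\gamma(D^*)\cdot\frac{s}{2s+8}$, a $(2+8/s)$-approximation, and $s=8/\epsilon$ gives $2+\epsilon$; the $2$ is precisely the ratio between the covering disk's radius (on the order of the diameter of $D^*$) and the radius $d/2$ of $D^*$. Your annulus/symmetric-difference argument is a workable, if more elaborate, substitute for the paper's simple containment argument, and it correctly contributes only a $1+O(1/s)$ factor --- but that means your overall proof, as written, delivers $(1+\epsilon)$ times an unproven factor of $2$, so the stated guarantee is not established. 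To repair it, drop the net claim, invoke \Cref{lemma:event2} to place the optimum at a vertex-defined event radius, and extract the $2$ from the radius comparison between the WSPD covering disk and $D^*$.
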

\begin{proof}
For a disk $D^*$ with optimal relative packedness centered at $p$ and with $q$ on its boundary, there is a $s$-separated pair $(A_i,B_i)$ with $p\in A_i, q\in B_i$ or $p\in B_i,q\in A_i$.
Based on the properties of WSPD, the diameter of $D^*$ is $d\geq s\rho$.
The smallest disk $D$ centered at $p$ containing $A_i\cup B_i$ has radius at most $4\rho+d$.
Let $\ell$ be the length of the curve inside $D^*$. Then, the packedness of $D$ is:
$
\gamma(D)\geq \frac{\ell}{4\rho+d} \geq \frac{\ell}{4d/s+d}\geq \frac{\ell}{d/2}\frac{1}{8/s+2}\geq \gamma(D^*)\frac{2s+8}{s}=\gamma(D^*) (2+\frac{8}{s}).
$
For $s=\frac{8}{\epsilon}$, this is a $(2+\epsilon)$-approximation for squares and using square length-queries.
\end{proof}


\paragraph{Approximate Length Queries for Squares}
 By a line-sweeping~\cite{de1997computational} over the edges of the curve and the queries, we first break each segment between each endpoint and the closest query side, resulting in at most $3n$ segments. Segments with both endpoints on queries (set $W$) and ones with at least one of their original endpoints (set $M$) are solved using different methods.
 
We partition the segments $W$ into a set of groups $[\frac{(i-1) \pi}{\sqrt{\epsilon}},\frac{i \pi}{\sqrt{\epsilon}}]$, for $i=1,\ldots, \frac{2 \pi}{\sqrt{\epsilon}}$, in which the difference between different slopes in a group is at most $\frac{\pi}{\sqrt{\epsilon}}$.
The idea of discretizing $\theta$ already existed in spanners like $\Theta$-graphs and Yao-graphs~\cite{wspd}, $\epsilon$-kernels~\cite{agarwal2004approximating} and range-queries~\cite{aghamolaei2018geometric}.
Each segment with inclination angle $\theta$ with endpoints on a strip of width $\Delta r$ has length $\frac {\Delta r}{\cos \theta}$. So we can get a $(1+\epsilon)$-approximation of the total length in any strip.
To handle square queries, in addition to $x,y,\Delta \theta$, we need another dimension $\Delta r$ which discretizes the perimeter $\epsilon \Delta r$.
Since we group the last two dimensions, their trees have $O(\frac{1}{\sqrt{\epsilon}})$ elements.
This is similar to type B queries in~\cite{gudmundsson2020approximating}. 

We preprocess $W$ into a stabbing windowing data-structure on $(x,y,\Delta \theta, \Delta r)$ for counting queries~\cite{de1997computational}. To answer length queries, aggregate on the last two dimensions, by computing the weighted sum using $m_s$ described in~\Cref{lemma:event2}.

 For segments $M$, i.e. segments with at least one endpoint from the original input, use an axis-parallel windowing query~\cite{de1997computational}, compute the sum, and divide it by the cell length. Note that if a segment of $M$ is intersected by a square query the segment falls completely inside the query.
The sum of these two values gives a $(1+\epsilon)$-approximation of the length queries in $O(n(\log^2 n)(\log^2 \frac{1}{\sqrt{\epsilon}})+\frac{n}{\epsilon})$ time.
\section{A Data Structure for Exact Length Queries}
%
The results of~\cite{aghamolaei2020windowing} allows querying translations of the query shape $Q$. We also allow querying all scalings with scale factor at least one (enlargements) of $Q$.
Using the events for the candidate values of the radii of the disks with maximum packedness, we build a set of AQDs for each disk of that radius. We call this data-structure {\em hierarchical aggregated query (HAQ)} data-structure (See \Cref{alg:ds}).

\begin{algorithm}[t]
\caption{Building The Data-Structure HAQ for Disks}
\label{alg:ds}
\begin{algorithmic}[1]
\Require{A polygonal curve $P$}
\State{$E=$ the set of radii of the events for the $c$-packedness of $P$ for disk queries}
\State{Build a balanced search tree $T$ on $E$}
\For{$r\in E$}
\State{Build an AQD for radius $r$ and store it at the node with number $r$ in $T$.}
\EndFor
\\ \Return{The augmented tree $T$.}
\end{algorithmic}
\end{algorithm}

During query time, the AQD built for the smallest radius that is greater than or equal to the radius of the query shape is used (See \Cref{alg:query}).
\begin{algorithm}[t]
\caption{Length Query using HAQ for Disks}
\label{alg:query}
\begin{algorithmic}[1]
\Require{An HAQ data-structure, a query disk $Q$}
\State{$v=$ Search HAQ to find the smallest radius that is greater than or equal to the radius of $Q$.}
\State{In the AQD of node $v$, find the center of $Q$ and compute the length.}
\end{algorithmic}
\end{algorithm} 

\begin{theorem}
The HAQ for disks (\Cref{alg:ds,alg:query}) can be constructed in $O(n^6\log n)$ time using $O(n^6)$ space, and the query time is $O(\log n+k)$, where $k$ is the number of intersected segments with the query shape.
\end{theorem}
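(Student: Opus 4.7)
The plan is to analyze \Cref{alg:ds} and \Cref{alg:query} in sequence, leveraging \Cref{thm:events} for the size of the event set $E$ and the AQD complexity bounds recalled after \Cref{fig:aqd}.

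For construction and space, I first invoke \Cref{thm:events} to obtain $|E|=O(n^4)$. Sorting $E$ and building the balanced BST $T$ on it costs $O(n^4\log n)$ time and $O(n^4)$ space, both of which will be dominated. For each radius $r\in E$, \Cref{alg:ds} builds one AQD of the edges of $P$ against a disk of radius $r$. Since the disk has constant combinatorial complexity, the general AQD bound from the preliminaries specializes to an arrangement of $O(n^2)$ cells constructible in $O(n^2\log n)$ time per radius. Summing over the $|E|=O(n^4)$ radii yields $O(n^6\log n)$ total preprocessing and $O(n^6)$ total space, absorbing the cost of $T$.

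For the query, \Cref{alg:query} first searches $T$ for the smallest stored radius $r^*\ge r_Q$ in $O(\log|E|)=O(\log n)$. It then performs a planar point location of the center of $Q$ in the AQD attached to $r^*$, on an $O(n^2)$-cell arrangement, using a standard Kirkpatrick or persistent-search-tree structure built once during preprocessing (its cost is absorbed in the AQD construction). This step takes $O(\log n)$ and returns a handle to the $k$ edges of $P$ crossed by $Q$; summing their chord-length contributions costs $O(k)$ since each circle--segment chord length is $O(1)$ to compute. The query total is $O(\log n + k)$.

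The only non-mechanical step is correctness: I must argue that the list attached to the located cell at the rounded-up radius $r^*$ coincides with the set of edges intersected by $Q$ at its actual radius $r_Q\le r^*$. I plan to derive this directly from \Cref{lemma:event2}, whose defining property of $E$ is that the set of intersected edges is constant between two consecutive events; since no event lies strictly in $(r_Q,r^*)$, the pattern stored at $r^*$ is the one in force at $r_Q$. Once this invariance is in hand, the time and space bounds follow from the accounting above, and this correctness argument is the main piece that needs careful verification in the full write-up.
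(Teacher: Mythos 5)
Your proposal follows essentially the same route as the paper's proof: $|E|=O(n^4)$ radii from \Cref{thm:events}, one $O(n^2)$-size AQD built in $O(n^2\log n)$ time per radius giving $O(n^6)$ space and $O(n^6\log n)$ preprocessing, and a query that searches the radius tree in $O(\log n)$ followed by a point location in the selected AQD costing $O(\log n+k)$, with correctness resting on the intersection pattern being invariant between consecutive event radii. Your write-up is, if anything, slightly more explicit than the paper's about the point-location machinery and the rounding-up correctness argument, but it is the same proof.
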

\begin{proof}
The size of an AQD of a polygonal curve with disk queries is $O(n^2)$. Based on \Cref{thm:events}, there are $O(n^4)$ different radii. So, the size of the data-structure is $O(n^6)$. The construction time of AQD is $O(n^2\log n)$, so the construction time of HAQ is $O(n^6\log n)$.

The query time consists of the time required for finding the value of $r$, which is the first level of tree and takes $O(\log n)$ time. Then, a query is made to the AQD for radius $r$, which takes $O(\log n+k)$, since the intersection events do not change between the radii in HAQ and the set of intersected segments remains $k$.

\Cref{alg:query} computes an exact solution, since the query shape falls inside the query shape used in the construction of the AQD.
\end{proof}

Given the set of events for a shape other than a disk, by replacing the AQD for disks with AQD for polygons, the HAQ data-structure extends to convex polygons. The complexities change according to the time and space complexity of the data-structure.
%
%
%

\bibliographystyle{abbrv}
\bibliography{refs}

\end{document}